\theoremstyle{definition}
\newtheorem{theorem}{Theorem}
\newtheorem{lemma}[theorem]{Lemma}
\newtheorem{proposition}[theorem]{Proposition}
\theoremstyle{definition}
\newtheorem{definition}[theorem]{Definition}
\newtheorem{example}[theorem]{Example}
\newtheorem{notation}[theorem]{Notation}
\newtheorem{remark}[theorem]{Remark}
\newtheorem*{model}{Problem Formulation}
\newtheorem{claim}{Claim}
\newcommand*{\myproofname}{Proof}
\newcommand{\N}{\mathbb{N}}
\newcommand{\R}{\mathbb{R}}
\newcommand{\Q}{\mathbb{Q}}
\newcommand{\F}{\mathbb{F}}
\newcommand{\calN}{\mathcal{N}}
\newcommand{\calL}{\mathcal{L}}
\newcommand{\mN}{\mathcal{N}}
\newcommand{\mE}{\mathcal{E}}
\newcommand{\mV}{\mathcal{V}}
\newcommand{\mF}{\mathcal{F}}
\newcommand{\mL}{\mathcal{L}}
\newcommand{\IS}{\textnormal{IS}}
\newcommand{\calR}{\mathcal{R}}
\newcommand{\gen}[1]{\langle #1\rangle}
\newcommand{\rank}{\operatorname{rank}}
\newcommand{\Fan}{\textnormal{Fan}}
\title{External Codes for Multiple
Unicast Networks \\ via Interference Alignment\footnote{Data sharing not applicable to this article as no datasets were generated or analysed during the current study.}}
\newcommand*\samethanks[1][\value{footnote}]{\footnotemark[#1]}
\author[1]{F. R. Kschischang
\thanks{Supported by a
Discovery Grant from the Natural Sciences and Engineering Research
Council of Canada.}}
\affil[1]{University of Toronto, Canada}
\author[2]{F. Manganiello
\thanks{Partially supported by the National Science Foundation under grant
DMS-1547399.}}
\affil[2]{Clemson University, U.S.A.}
\author[3]{A. Ravagnani
\thanks{Supported
by the Dutch Research Council through grants VI.Vidi.203.045 and OCENW.KLEIN.539, by the Royal Academy of Arts and Sciences of the Netherlands, and by the European Commission via the ENCODE project.}}
\affil[3]{Eindhoven University of Technology, the Netherlands}
\author[2]{K. Savary
\samethanks[2]}
\date{}
\begin{document}

\maketitle
\begin{abstract}
We introduce a formal framework to study the multiple unicast problem for a coded network in which the network code is linear over a finite field and fixed.
We show that the problem corresponds to an interference alignment problem over a finite field.
In this context, we establish an outer bound for the achievable rate region and provide examples of networks where the bound is sharp.
We finally give evidence of the crucial role played by the field characteristic in the problem.
\end{abstract}

\medskip

\section{Introduction}

The typical scenario considered in the context of
\textit{network coding} consists of one or multiple sources of information attempting to communicate 
to multiple terminals through a network of intermediate nodes.
Various communication 
paradigms have been studied 
in this setting, including noisy, adversarial, error-free multicast, and multiple-unicast networks; see~\cite{ACLY00,LYC03,KM03,DFZ05,WGJ11,DVM10,RDMMJV10,RK18} among many others.

To our best knowledge,
in most network coding references 
users are allowed to freely design both the \textit{network code} (i.e., how the intermediate vertices process information packets) and the \textit{external codes} of the sources.
In particular, designing the network code is part of the communication problem.

This paper considers the multiple unicast problem when the network code is linear and \textit{fixed}, and only the external network codes can be freely designed by the source-receiver pairs. 
In this context, source-receiver pairs compete for network resources and interfere with each other.
We argue that,
in this regime,
the multiple unicast problem corresponds to an \textit{interference alignment problem} over finite fields.
While interference alignment over the complex field
has been extensively studied in more classical information theory settings,
methods do not extend to finite fields in any obvious way.
We refer to~\cite{CJ08} and~\cite{ZY16}
for the ``classical'' interference alignment problem.

The concept of interference alignment has also been considered over finite fields, in connection with network coding.
For example, \cite{DVM10} and \cite{RDMMJV10} propose interference alignment solutions, in combination with network coding, to construct schemes for coded, multiple unicast networks.  Related contributions are
\cite{KJ13,KJ14,CJ08,HC14}, all of which study a problem that is different from the one we address in this paper.

This work makes three main contributions: (1) it introduces a  framework for investigating multi-shot interference alignment problems over finite fields;
(2) it establishes an outer bound for the achievable rate regions in the context outlined above and provides examples where the bound is sharp;
(3) it shows how the field characteristic plays a crucial role 
in the solution to this problem.
Note that (3) already played an important role in the networks introduced in \cite{DFZ05}, but is in sharp contrast with what is typically observed in network coding results for unicast networks, where the field \textit{size}, rather than the characteristic, is the main player.

The rest of the paper is organized as follows. 
In Section~\ref{s:matrix-representation} we formally introduce the communication model and formulate the problem on which we focus. In Section~\ref{s:achievable} we formalize the concepts of achievable rate regions, considering the number of channel uses. Section~\ref{s:outer} establishes an outer bound for said achievable rate regions, and Section~\ref{s:char} describes the role played by the field characteristic.
The paper contains several examples that illustrate concepts and results.

\section{Communication Model and Problem Formulation}\label{s:matrix-representation}

Throughout this paper,
$q$ is a prime power and $\F_q$ denotes the finite field with~$q$ elements. We denote the set of positive integers by $\N=\{1,2, \ldots\}$. All vectors in this paper are row vectors. For $\alpha_1,\dots,\alpha_\ell\in \F_q^n$, we denote by $\langle \alpha_1,\dots,\alpha_\ell \rangle$ the $\F_q$-span of $\alpha_1,\dots,\alpha_\ell$. When $\F_p\subseteq \F_q$, we denote by $\langle \alpha_1,\dots,\alpha_\ell \rangle_p$ the $\F_p$-span of $\alpha_1,\dots,\alpha_\ell$.

We start by informally describing the problem studied in this paper, deferring rigorous definitions for later.

\begin{model}
We consider $n$ uncoordinated sources and terminals, denoted by $S_1,...,S_n$ and $T_1,...,T_n$, respectively. Terminal $T_i$ is interested in decoding only the symbols emitted by source $S_i$ (\textit{multiple unicast} problem).
The sources are connected to the terminals via a network of intermediate nodes, $\calN$ (a directed, acyclic multigraph).
The alphabet of the network is~$\F_q$ and each edge has a capacity of 
one symbol. Alphabet symbols combine linearly at the intermediate nodes of the network, i.e., \textit{linear network coding} is used; see~\cite{LYC03,KM03}. 

We are interested in describing the region of achievable rates for a network of the type we just described, assuming that the operations performed by the intermediate nodes are linear and \textit{fixed}. In other words,
sources and terminals cannot change how the network's nodes linearly combine symbols, but are free to agree on a codebook. Under these assumptions, source-receiver pairs compete for the network's resources.
\end{model}

The following example illustrates the problem at hand. 

\begin{example} \label{ex:1}
Figure~\ref{fig:next1} depicts a network with two  sources and terminals. 
Terminal $T_i$ is interested only in decoding messages from source $S_i$. The operations performed by the gray vertices are fixed. We will 
return to this network in Example~\ref{ex:4}.
\end{example}

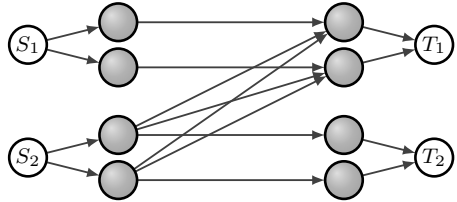
\begin{figure}[H]
\centering
\begin{tikzpicture}[scale=0.6]
\Vertex[label=$S_1$, x=-2, y=0, size=0.5, color=gray, opacity=0]{S1}
\Vertex[label=$S_2$, x=-2, y=-2.5, size=0.5, color=gray, opacity=0]{S2}
\Vertex[label=$T_1$, x=7, y=0, size=0.5, color=gray, opacity=0]{T1}
\Vertex[label=$T_2$, x=7, y=-2.5, size=0.5, color=gray, opacity=0]{T2}
\Vertex[y = 0.5, size=0.5, size=0.5, color=gray, opacity=0.3,style={shading=ball, ball color=gray}]{1}
\Vertex[y = -0.5, size=0.5, size=0.5, color=gray, opacity=0.3,style={shading=ball, ball color=gray}]{2}
\Vertex[x=0, y=-2, size=0.5, size=0.5,  color=gray,  opacity=0.3, style={shading=ball, ball color=gray}]{3}
\Vertex[x=0, y=-3, size=0.5, size=0.5,  color=gray,  opacity=0.3, style={shading=ball, ball color=gray}]{4}
\Vertex[x=5, y=.5, size=0.5, size=0.5,  color=gray,  opacity=0.3, style={shading=ball, ball color=gray}]{7} 
\Vertex[x=5, y=-.5, size=0.5, size=0.5,  color=gray,  opacity=0.3, style={shading=ball, ball color=gray}]{8} 
\Vertex[x=5, y=-2, size=0.5, size=0.5,  color=gray,  opacity=0.3, style={shading=ball, ball color=gray}]{9}
\Vertex[x=5, y=-3, size=0.5, size=0.5,  color=gray,  opacity=0.3, style={shading=ball, ball color=gray}]{10}
\Edge[Direct, lw=.7](S1)(1)
\Edge[Direct, lw=.7](S1)(2)
\Edge[Direct, lw=.7](S2)(3)
\Edge[Direct, lw=.7](S2)(4)
\Edge[Direct, lw=.7](7)(T1)
\Edge[Direct, lw=.7](8)(T1)
\Edge[Direct, lw=.7](9)(T2)
\Edge[Direct, lw=.7](10)(T2)
\Edge[Direct, lw=.7](3)(7)
\Edge[Direct, lw=.7](4)(7)
\Edge[Direct, lw=.7](3)(8)
\Edge[Direct, lw=.7](4)(8) 
\Edge[Direct, lw=.7](1)(7)
\Edge[Direct, lw=.7](2)(8)
\Edge[Direct, lw=.7](3)(9)
\Edge[Direct, lw=.7](4)(10)
\end{tikzpicture} 
\caption{Network for Example~\ref{ex:1}.}
\label{fig:next1}
\end{figure}

In this paper, we propose the following formal 
definition of a communication network, which will facilitate the analysis of the problem we have described above.

\begin{definition} \label{def:network}
A \textbf{multiple unicast network} is a 4-tuple $$\calN=(\mV,\mE,(S_1, \ldots,S_n), (T_1, \ldots,T_n)),$$ where:
\begin{enumerate}[label=(\Alph*)] \setlength\itemsep{0em}
    \item $(\mV,\mE)$ is a finite, directed, acyclic multigraph, which may include parallel edges;
    \item $n \ge 1$ is an integer;
    \item $S_1, \ldots,S_n \in \mV$ are distinct vertices called \textbf{sources};
    \item $T_1, \ldots,T_n \in \mV$ are distinct vertices 
    called \textbf{terminals} or \textbf{sinks}. 
\end{enumerate}
We also assume that the following hold:
\begin{enumerate}[label=(\Alph*)]  \setlength\itemsep{0em}\setcounter{enumi}{4}
    \item $\{S_1, \ldots, S_n\} \cap \{T_1, \ldots, T_n\}=\emptyset$;
    \item for any $i \in \{1,\ldots,n\}$, 
    there exists a directed path in $(\mV,\mE)$ connecting $S_i$ to $T_i$;
    \item sources do not have incoming edges and terminals do not have outgoing edges;
    \item for every vertex $V \in \mV$, there exists a directed path from $S_i$ to $V$ and from $V$ to $T_j$, for some $i,j \in \{1, \ldots,n\}$.
\end{enumerate}
For $V\in \mV$ we denote by $\partial^-(V)$, respectively  $\partial^+(V)$, the indegree, respectively outdegree, of $V$, meaning the number of edges incident to, respectively from, $V$. Moreover, let $\mV^*=\mV \setminus (\{S_1, \ldots,S_n\} \cup \{T_1, \ldots, T_n\})$ denote the set of nonsource and nonterminal vertices, meaning the set of intermediate network nodes. 
\end{definition}

We assume that 
the intermediate nodes of a network process the alphabet symbols linearly. This is made rigorous by the following concept.

\begin{definition}
Let $\mN$ be as in Definition~\ref{def:network}. A \textbf{linear network code}, or simply \textbf{network code}, for $\mN$ is a 
tuple of matrices
 $$\mF=\left( \mF_V \in \F_q^{\partial^-(V) \times \partial^+(V)}\mid V \in \mV^*\right).$$
\end{definition}

Given a network code, the network operates as follows. Fix an intermediate vertex $V\in\mV^*$ and let 
$a=\partial^-(V)$, 
$b=\partial^+(V)$ be the indegree and outdegree of $V$, respectively.
Then $V$ collects a vector of alphabet symbols $(x_1, \ldots, x_a)$ over the incoming edges, and emits the entries of
$$\begin{pmatrix}
y_1 & \ldots & y_b
\end{pmatrix}=\begin{pmatrix}
x_1 & \ldots & x_a
\end{pmatrix} \cdot \mF_V \in \F_q^b
$$
over the outgoing edges.
This fully specifies how $V$ processes information, provided a linear extension of the partial order of the network edges is fixed. Thus, $x_1$ is the symbol that arrives on the minimum edge (with respect to this linear order) at $V$, $x_2$ is the symbol that arrives on the minimum of the remaining edges, etc., and similarly for $y_1, \dots, y_b$. In this paper, networks are delay-free, and communication is instantaneous.

\begin{example}\label{ex:4}
The same network admits multiple network codes. For this example, consider the network $\calN$ from Example \ref{ex:1} with the intermediate nodes labeled as in Figure \ref{fig:next2} and let us assume that the alphabet is $\F_3$.
 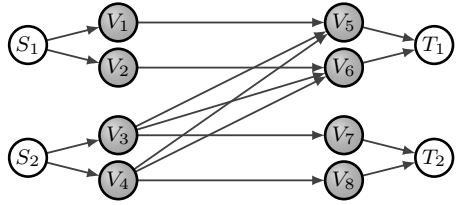
\begin{figure}[H]
\centering
\begin{tikzpicture}[scale=0.6]
\Vertex[label=$S_1$, x=-2, y=0, size=0.5, color=gray, opacity=0]{S1}
\Vertex[label=$S_2$, x=-2, y=-2.5, size=0.5, color=gray, opacity=0]{S2}
\Vertex[label=$T_1$, x=7, y=0, size=0.5, color=gray, opacity=0]{T1}
\Vertex[label=$T_2$, x=7, y=-2.5, size=0.5, color=gray, opacity=0]{T2}
\Vertex[label=$V_1$, y = 0.5, size=0.5, size=0.5, color=gray, opacity=0.3,style={shading=ball, ball color=gray}]{1}
\Vertex[label=$V_2$, y = -0.5, size=0.5, size=0.5, color=gray, opacity=0.3,style={shading=ball, ball color=gray}]{2}
\Vertex[label=$V_3$, x=0, y=-2, size=0.5, size=0.5,  color=gray,  opacity=0.3, style={shading=ball, ball color=gray}]{3}
\Vertex[label=$V_4$, x=0, y=-3, size=0.5, size=0.5,  color=gray,  opacity=0.3, style={shading=ball, ball color=gray}]{4}
\Vertex[label=$V_5$, x=5, y=.5, size=0.5, size=0.5,  color=gray,  opacity=0.3, style={shading=ball, ball color=gray}]{7} 
\Vertex[label=$V_6$, x=5, y=-.5, size=0.5, size=0.5,  color=gray,  opacity=0.3, style={shading=ball, ball color=gray}]{8} 
\Vertex[label=$V_7$, x=5, y=-2, size=0.5, size=0.5,  color=gray,  opacity=0.3, style={shading=ball, ball color=gray}]{9}
\Vertex[label=$V_8$, x=5, y=-3, size=0.5, size=0.5,  color=gray,  opacity=0.3, style={shading=ball, ball color=gray}]{10}
\Edge[Direct, lw=.7](S1)(1)
\Edge[Direct, lw=.7](S1)(2)
\Edge[Direct, lw=.7](S2)(3)
\Edge[Direct, lw=.7](S2)(4)
\Edge[Direct, lw=.7](7)(T1)
\Edge[Direct, lw=.7](8)(T1)
\Edge[Direct, lw=.7](9)(T2)
\Edge[Direct, lw=.7](10)(T2)
\Edge[Direct, lw=.7](3)(7)
\Edge[Direct, lw=.7](4)(7)
\Edge[Direct, lw=.7](3)(8)
\Edge[Direct, lw=.7](4)(8) 
\Edge[Direct, lw=.7](1)(7)
\Edge[Direct, lw=.7](2)(8)
\Edge[Direct, lw=.7](3)(9)
\Edge[Direct, lw=.7](4)(10)
\end{tikzpicture} 
\caption{Network from Example~\ref{ex:1} with labeled vertices.}
\label{fig:next2}
\end{figure} 
  The following are two network codes defined on the network of Figure \ref{fig:next2} where the order of the entries in the tuple follows the order of the intermediate nodes, and the order of the entries in the matrices agrees with the order of the outgoing and incoming channels in the nodes respectively.
\[\mF_1=\left(\begin{pmatrix}1\end{pmatrix},\begin{pmatrix}1\end{pmatrix},\begin{pmatrix}1&1&1\end{pmatrix},\begin{pmatrix}1&1&1\end{pmatrix},\begin{pmatrix}1\\1\\1\end{pmatrix},\begin{pmatrix}1\\1\\1\end{pmatrix},\begin{pmatrix}1\end{pmatrix},\begin{pmatrix}1\end{pmatrix}\right)\]
\[\mF_2=\left(\begin{pmatrix}1\end{pmatrix},\begin{pmatrix}1\end{pmatrix},\begin{pmatrix}1&1&1\end{pmatrix},\begin{pmatrix}1&1&1\end{pmatrix},\begin{pmatrix}1\\1\\1\end{pmatrix},\begin{pmatrix}1\\1\\2\end{pmatrix},\begin{pmatrix}1\end{pmatrix},\begin{pmatrix}1\end{pmatrix}\right)\]
\end{example}

The choice of a linear network code $\mF$ for a
communication network $\mN$ induces end-to-end transfer matrices, one for each source-terminal pair. We denote by $F_{i,j}$ the transfer matrix from $S_i$ to $T_j$.

\begin{example}
Using the network $\calN$ from Example \ref{ex:1}, the network code $\mF_1$ induces the transfer matrices 
\[F_{1,1}=F_{2,2}=\begin{pmatrix}
    1&0\\0&1
\end{pmatrix},\  F_{1,2} = \begin{pmatrix}
    0&0\\0&0
\end{pmatrix},\  \textnormal{and}\ F_{2,1}=\begin{pmatrix}
    1&1\\1&1
\end{pmatrix},\]
whereas the network code $\mF_2$ induces transfer matrices 
\begin{equation*}
    F_{1,1}=F_{2,2}=\begin{pmatrix}
    1&0\\0&1
\end{pmatrix},\  F_{1,2} = \begin{pmatrix}
    0&0\\0&0
\end{pmatrix},\  \textnormal{and}\ F_{2,1}=\begin{pmatrix}
    1&1\\1&2
\end{pmatrix}. \qedhere
\end{equation*}
\end{example}

In one channel use, assuming that $S_i$ transmits $x_i\in \F_q^{\partial^+(S_i)}$, terminal $T_j$ observes the vector $$y_j=\sum_{i=1}^nx_iF_{i,j}\in \F_q^{\partial^-(T_j)}.$$

Since in this paper both the communication network $\mN$ and the linear network code~$\mF$ are supposed to be \textit{fixed},
the end-to-end transfer matrices induced by them fully specify the communication channel.
We, therefore, propose the following definition.

\begin{definition} \label{def:LMUC}
A \textbf{$q$-ary linear multiple unicast channel} (in short, \textbf{$q$-LMUC}) is a 4-tuple $\mL=(n,\pmb{s},\pmb{t},F)$, where $n\in \N$ is a positive integer, $\pmb{s}=(s_1,\dots,s_n),\pmb{t}=(t_1,\dots,t_n)\in \N^n$, and $F\in \F_q^{s\times t}$, where $s=\sum_{i=1}^n s_i$ and $t=\sum_{i=1}^n t_i$. We call $F$ the \textbf{transfer matrix} and regard it as a block matrix
\[F=\begin{pmatrix}F_{1,1}& \cdots & F_{1,n}\\
    \vdots&\ddots&\vdots\\ F_{n,1}&\cdots&F_{n,n}\end{pmatrix},\]
where block $F_{i,j}$ has size $s_i \times t_j$.
\end{definition}

Here $n$ represents the number of source-terminal pairs, $s_i=\partial^+(S_i)$, and $t_i=\partial^-(T_i)$ for $i=1,\dots,n$. Moreover,~$F$ collects the transfer matrices for each source-terminal pair.
Note that we do not need to remember the communication network or the entire network code.
The matrix~$F$ fully describes the end-to-end channel laws for a single channel use. 

We can extend this definition for multiple uses of the channel as follows. Suppose that the network is used $m \ge 1$ times. 
The channel input is an element
$x=(x_1,\dots,x_n)\in \F_{q^m}^{s_1}\times \cdots \times \F_{q^m}^{s_n}=\F_{q^m}^s$. More precisely, for all $i \in \{1,\dots,n\}$, $x_i\in \F_{q^m}^{s_i}$ is the input that source $S_i$ emits. Then the channel output is $\smash{y=(y_1,\dots,y_n)\in \F_{q^m}^{t_1}\times \cdots \times \F_{q^m}^{t_n}=\F_{q^m}^t}$, where  
\begin{equation}
\label{e:model}
    y_i=x_iF_{i,i}+\sum_{j\neq i}x_jF_{j,i}
\end{equation}
is the vector that terminal $T_i$ receives on its incoming edges.
Note that the field extension~$\F_{q^m}$ models $m$ uses of the channel because the network code is assumed to be $\F_q$-linear.

\begin{remark}
Given any $q$-LMUC as in Definition~\ref{def:LMUC},
it is always possible to construct a multiple unicast network $\mN$ and a network code $\mF$ for $\mN$ that induces the given transfer matrix. We illustrate the procedure with an example.
\end{remark}

\begin{example}\label{ex:8}
The $11$-LMUC \[\calL = \left(2,(1,2),(2,2),\begin{pmatrix}
    1 & 0 & 2 & 3\\ 0 & 4 & 5 & 0\\ 6 & 7 & 0 & 0
\end{pmatrix}\right)\]
induces the matrices
\[F_{1,1}=\begin{pmatrix}
    1 & 0
\end{pmatrix},\quad  F_{1,2}=\begin{pmatrix}
    2 &3
\end{pmatrix},\quad F_{2,1}=\begin{pmatrix}
    0 & 4\\ 6 & 7
\end{pmatrix},\quad  F_{2,2}=\begin{pmatrix}
    5&0\\0&0
\end{pmatrix}.\]
A network code
that induces these matrices
is
\begin{equation*}
    \mF=\left(\begin{pmatrix}
    1 & 1 & 1
\end{pmatrix}, \begin{pmatrix}
    1 & 1
\end{pmatrix}, \begin{pmatrix}
    1 & 1
\end{pmatrix}, \begin{pmatrix}
    1 \\ 6
\end{pmatrix}, \begin{pmatrix}
    4 \\ 7
\end{pmatrix},
\begin{pmatrix}
    2 \\ 5
\end{pmatrix}, \begin{pmatrix}
    3
\end{pmatrix}\right),
\end{equation*}
realized by the 
network 
depicted in Figure~\ref{fig:nextt}.
Note that \begin{equation*}
    \mF=\left(\begin{pmatrix}
    1 & 2 & 3
\end{pmatrix}, \begin{pmatrix}
    4 & 5
\end{pmatrix}, \begin{pmatrix}
    6 & 7
\end{pmatrix}, \begin{pmatrix}
    1 \\ 1
\end{pmatrix}, \begin{pmatrix}
    1 \\ 1
\end{pmatrix},
\begin{pmatrix}
    1 \\ 1
\end{pmatrix}, \begin{pmatrix}
    1
\end{pmatrix}\right),
\end{equation*}
also induces the given transfer matrix.
\end{example}

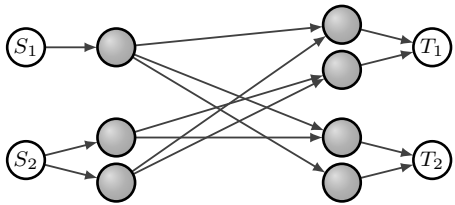
\begin{figure}[H]
\centering
\begin{tikzpicture}[scale=0.6]
\Vertex[label=$S_1$, x=-2, y=0, size=0.5, color=gray, opacity=0]{S1}
\Vertex[label=$S_2$, x=-2, y=-2.5, size=0.5, color=gray, opacity=0]{S2}
\Vertex[label=$T_1$, x=7, y=0, size=0.5, color=gray, opacity=0]{T1}
\Vertex[label=$T_2$, x=7, y=-2.5, size=0.5, color=gray, opacity=0]{T2}
\Vertex[y = 0, size=0.5, size=0.5, color=gray, opacity=0.3,style={shading=ball, ball color=gray}]{1}
\Vertex[x=0, y=-2, size=0.5, size=0.5,  color=gray,  opacity=0.3, style={shading=ball, ball color=gray}]{2}
\Vertex[x=0, y=-3, size=0.5, size=0.5,  color=gray,  opacity=0.3, style={shading=ball, ball color=gray}]{3}
\Vertex[x=5, y=.5, size=0.5, size=0.5,  color=gray,  opacity=0.3, style={shading=ball, ball color=gray}]{7} 
\Vertex[x=5, y=-.5, size=0.5, size=0.5,  color=gray,  opacity=0.3, style={shading=ball, ball color=gray}]{8} 
\Vertex[x=5, y=-2, size=0.5, size=0.5,  color=gray,  opacity=0.3, style={shading=ball, ball color=gray}]{9}
\Vertex[x=5, y=-3, size=0.5, size=0.5,  color=gray,  opacity=0.3, style={shading=ball, ball color=gray}]{10}
\Edge[Direct, lw=.7](S1)(1)
\Edge[Direct, lw=.7](S2)(2)
\Edge[Direct, lw=.7](S2)(3)
\Edge[Direct, lw=.7](7)(T1)
\Edge[Direct, lw=.7](8)(T1)
\Edge[Direct, lw=.7](9)(T2)
\Edge[Direct, lw=.7](10)(T2)
\Edge[Direct, lw=.7](1)(7)
\Edge[Direct, lw=.7](2)(9)
\Edge[Direct, lw=.7](1)(9)
\Edge[Direct, lw=.7](1)(10)
\Edge[Direct, lw=.7](2)(8)
\Edge[Direct, lw=.7](3)(7)
\Edge[Direct, lw=.7](3)(8)
\end{tikzpicture} 
\caption{Network for Example~\ref{ex:8}.}
\label{fig:nextt}
\end{figure}

Given a $q$-LMUC  $\mL=(n,\pmb{s},\pmb{t},F)$
representing a $q$-ary linear multiple unicast channel, we are interested in determining
which rates can be achieved by carefully selecting the codebooks of the sources. Recall that each terminal $T_i$ is interested in decoding \textit{only} its corresponding source~$S_i$. Therefore the information emitted by the sources $S_j$ with $j \neq i$ acts as interference for terminal $T_i$.

\section{Achievable Rate Regions and Their Properties}\label{s:achievable}

In this section, we propose a formal framework to describe the 
capacity of a $q$-LMUC, based on the concept of \textit{fan-out set} and \textit{unambiguous code}.
We start with the concept of  code(book).

\begin{definition} \label{def:code}
Let $\mL=(n,\pmb{s},\pmb{t},F)$ be a $q$-LMUC and let $m \ge 1$ be an integer.
An \textbf{$m$-external code}, or simply an \textbf{$m$-code}, for $\mL$ is a Cartesian product
$C=C_1\times\cdots\times C_n$,
where $C_i \subseteq \F_{q^m}^{s_i}$ for all
$i \in \{1,\ldots,n\}$. 
The elements of each $C_i$ are called \textbf{codewords}.
\end{definition}

Note that we do \textit{not} require that $C$ is a linear space.
Recall that the parameter~$m$ in Definition~\ref{def:code} represents the number of channel uses, as already mentioned when introducing Equation~\eqref{e:model}. In the sequel, 
for a 
$q$-LMUC $\mL=(n,\pmb{s},\pmb{t},F)$
and for $i\in \{1,\dots,n\}$, we denote by $\pi_i:\F_{q^m}^t\rightarrow \F_{q^m}^{t_i}$ the projection on the $i$th block of coordinates (recall that 
$\pmb{t}=(t_1, \ldots,t_n)$ and $t=t_1 + \cdots + t_n$).

\begin{definition} \label{fan-out}
Let $\mL=(n,\pmb{s},\pmb{t},F)$
be a $q$-LMUC, $i \in \{1, \ldots, n\}$, $C$ an $m$-code for $\mL$, and  
$x \in C_i$. We denote by
$$\Fan_i(x,C):=\{\pi_i((x_1,\dots,x_{i-1},x,x_{i+1},\dots, x_n)F) \mid x_j\in C_j \mbox{ for all } 
j \neq i\}$$
the \textbf{$i$-th fan-out set} of $x$ with respect to terminal $i$ and the code $C$. The \textbf{$i$-th fan-out set} of $C$ is $\Fan_i(C)=\cup_{x\in C_i}{\Fan_i(x,C)}\subseteq \F_{q^m}^{t_i}$, for all $i \in \{1,\ldots,n\}$. 
\end{definition}

Following the notation of Definition~\ref{fan-out},
$\Fan_i(x,C)$ is the set of possible words that the $i$th terminal can possibly receive when the $i$th source emits $x$ and the other sources emit their own codewords. Fan-out sets relate to the concept of interference as follows.

\begin{definition} \label{def:IS}
Let $\mL$, $m$, and $C$ be as in Definition~\ref{def:code}.
 We define the \textbf{interference set} of~$C$ at terminal $T_i$ as
$$\IS_i(C)=\Fan_i(0,C)=\left\{\sum_{j\neq i}c_jF_{j,i}\mid c_j\in C_j\right\}.$$
\end{definition}

\begin{remark}
Note that using Definition~\ref{def:IS} we can rewrite equation~\eqref{e:model} as
\begin{equation}\label{e:fan_is}
   \Fan_i(x,C)=xF_{i,i} +\IS_i(C) = \{xF_{i,i} + y \mid y \in \IS_i(C)\}.
\end{equation}
for any $m$-code $C$, any $i \in \{1, \ldots, n\}$, and any $x \in C_i$.
Therefore the $i$-th fan-out set of $x \in C_i$ is a \textbf{translate} (or a \textbf{coset}, if each $C_i$ is linear) of the interference set $\IS_i(C)$.
\end{remark}

Communication 
is considered to be successful when each codeword can be uniquely decoded. The following definition models this concept.

\begin{definition} \label{def:unamb}
Let $\mL$, $m$, and $C$ be as in  Definition~\ref{def:code}. We say that $C$ is \textbf{unambiguous} for $\mL$
if for all $i \in \{1,...,n\}$ and for all codewords $x_1,x_2 \in C_i$ with
$x_1 \neq x_2$, we have
$\Fan_i(x_1,C) \cap \Fan_i(x_2,C) = \emptyset$.
\end{definition}

An unambiguous code, as in the previous definition, uniquely defines decoder maps.
More precisely,
the
\textbf{$i$-th decoder} is the map
$D_i:\Fan_i(C)\rightarrow \F_{q^m}^{s_i}$ defined by $D_i(v_i)=x_i$ for all $v_i \in \Fan_i(C)$, where $x_i \in C_i$ is the only element with 
$v_i \in \Fan_i(x_i,C)$.

\begin{remark} \label{rem:subcode}
    Following the notation of
    Definition~\ref{def:unamb},
    if $C$ is an unambiguous $m$-code for~$\mL$
    and $C'$ is an $m$-code for $\mL$ with $C' \subseteq C$, then
    $C'$ is unambiguous as well.
\end{remark}

We are now ready to give a rigorous definition of \textit{the achievable rate region} of a $q$-LMUC~$\mL$.
For convenience, for $m \ge 1$ we define the set $$\log_{q^m}(\N^n)=\{(\log_{q^m}(u_1),...,\log_{q^m}(u_n)) \mid (u_1,...,u_n) \in \N^n\}.$$

 \begin{definition}\label{def:multi_achievability}
 The \textbf{$m$-shot achievable rate region}
 of a $q$-LMUC $\mL$ as in Definition~\ref{def:LMUC} is the set 
\begin{multline*}
    \mathcal{R}_m(\mL)=\{\alpha\in \log_{q^m}(\N^n) \mid \exists \, C=C_1\times\cdots\times C_n \mbox{ unambiguous $m$-code for $\mL$} \\ \mbox{with }\log_{q^m}(|C_i|) = \alpha_i \ \forall \ 1 \le i \le n \} \subseteq \R_{\ge 0}^n.
\end{multline*}
The \textbf{achievable rate region} of $\mL$ is the set
$$\calR(\mL) = \overline{\bigcup_{m \ge 1} \calR_m(\mL)},$$
where the overline indicates the closure operator with respect to the 
Euclidean topology on~$\R^n$. The elements of $\calR(\mL)$ elements are called \textbf{achievable rates}. 
 \end{definition}
 
 The following example illustrates that different LMUCs supported on the same network might have very different achievable rate regions.

 \begin{example}
The two 3-LMUCs induced by Example \ref{ex:4} are \[\calL_1=\left(2,(2,2),(2,2),\begin{pmatrix}
    1 & 0 & 0 & 0\\ 0 & 1 & 0 & 0\\1 & 1 & 1 & 0\\1 & 1 & 0 &1
\end{pmatrix}\right), \ \ \calL_2=\left(2,(2,2),(2,2),\begin{pmatrix}
    1 & 0 & 0 & 0\\ 0 & 1 & 0 & 0\\1 & 1 & 1 & 0\\1 & 2 & 0 &1
\end{pmatrix}\right).\]
Although the two 3-LMUC have the same underlying network, they have different 1-shot achievable rate regions. Indeed, for $\calL_1$, the code $\langle (1,2)\rangle_{\F_3}\times \F_3^2$  is unambiguous, meaning that $(1,2)\in \calR_1(\calL)$. On the other hand,  $(1,2)\notin \calR_1(\calL_2)$, as we now briefly explain. In~$\calL_2$, the matrix  $F_{2,1}$ is invertible. Moreover, 
a rate of the form $(\alpha_1,2)$ can be achieved only by a code of the form $C=C_1 \times \F_3^2$.
Together with the invertibility of $F_{2,1}$, this implies that $\IS_1(C)=\F_3^2$, making $C=\{0\}\times \F_3^2$ the only unambiguous code with $C_2=\F_3^2$.
 \end{example}

 The following result states a quite intuitive property of achievable rate regions.

 \begin{proposition}
 Let $\mL$ be a $q$-LMUC as in Definition~\ref{def:LMUC} and let $m \ge 1$.
 If $(\alpha_1, \ldots, \alpha_n) \in \calR_m(\mL)$ and $\smash{(\beta_1, \ldots,\beta_n) \in \log_{q^m}(\N^n)}$ satisfies $\beta_i \le \alpha_i$ for all $i \in \{1, \ldots, n\}$, then we have 
 $(\beta_1, \ldots, \beta_n) \in \calR_m(\mL)$.
 \end{proposition}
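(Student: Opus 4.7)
The plan is to reduce the statement directly to Remark~\ref{rem:subcode} by passing to subsets. Since $(\alpha_1,\ldots,\alpha_n)\in\calR_m(\mL)$, I can fix an unambiguous $m$-code $C=C_1\times\cdots\times C_n$ for $\mL$ with $\log_{q^m}(|C_i|)=\alpha_i$ for every $i$. My goal is to manufacture an unambiguous $m$-code $C'=C_1'\times\cdots\times C_n'$ realizing the rate vector $(\beta_1,\ldots,\beta_n)$.

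First I would use the hypothesis that $\beta_i\in \log_{q^m}(\N)$, which means that $q^{m\beta_i}$ is a positive integer for every $i$. Combined with $\beta_i\le\alpha_i$ and $|C_i|=q^{m\alpha_i}$, this gives $1\le q^{m\beta_i}\le |C_i|$, so for each $i$ I can pick an arbitrary subset $C_i'\subseteq C_i$ with $|C_i'|=q^{m\beta_i}$; in particular $\log_{q^m}(|C_i'|)=\beta_i$.

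Then I would set $C'=C_1'\times\cdots\times C_n'$ and observe that $C'\subseteq C$. By Remark~\ref{rem:subcode}, since $C$ is unambiguous for $\mL$, so is $C'$. This certifies $(\beta_1,\ldots,\beta_n)\in\calR_m(\mL)$ by Definition~\ref{def:multi_achievability}, completing the argument.

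There is essentially no obstacle here: the proposition is a monotonicity statement, and the only non-trivial ingredient is the observation that shrinking the codebooks cannot destroy unambiguity, which has already been recorded as Remark~\ref{rem:subcode}. The assumption $(\beta_1,\ldots,\beta_n)\in \log_{q^m}(\N^n)$ is exactly what is needed to guarantee the existence of subsets of the prescribed cardinality.
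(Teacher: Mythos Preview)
Your proposal is correct and follows essentially the same argument as the paper: pick an unambiguous code realizing $(\alpha_1,\ldots,\alpha_n)$, take subsets of the $C_i$'s of the required sizes (using that $q^{m\beta_i}\in\N$ and $q^{m\beta_i}\le|C_i|$), and invoke Remark~\ref{rem:subcode}.
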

 
 \begin{proof}
 Since $(\alpha_1, \ldots, \alpha_n) \in \calR_m(\mL)$,
 there exists an $m$-code $C=C_1 \times \cdots \times C_n$ unambiguous for~$\mL$
 with $|C_i|=q^{m \alpha_i}$
 for all $i$. Since $(\beta_1, \ldots, \beta_n) \in \log_{q^m}(\N^n)$, for all $i$ there exists $D_i \subseteq C_i$ with $|D_i|=q^{m \beta_i}$.
 By Remark~\ref{rem:subcode}, the $m$-code
 $D=D_1 \times \cdots \times D_n$ is unambiguous for~$\mL$. This establishes the desired result.
 \end{proof}
 
 We now investigate how the various achievable rate regions relate to each other.
 The following two results are 
 inspired by (but do not immediately follow from) the concept of time sharing.

\begin{proposition} \label{mixed}
Let $\mL$ be a $q$-LMUC as in Definition~\ref{def:LMUC} and let $m,m' \ge 1$. We have
$$\calR_{m+m'}(\mL) \supseteq \frac{m\, \calR_m(\mL) + m' \, \calR_{m'}(\mL)}{m+m'}.$$
\end{proposition}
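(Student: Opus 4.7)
The plan is a time-sharing argument: I would concatenate an unambiguous $m$-code with an unambiguous $m'$-code to obtain an unambiguous $(m+m')$-code whose per-source log-sizes are the required weighted averages. Fix $\alpha \in \calR_m(\mL)$ and $\alpha' \in \calR_{m'}(\mL)$, and let $C = C_1 \times \cdots \times C_n$ and $C' = C'_1 \times \cdots \times C'_n$ be witnessing unambiguous codes with $|C_i| = q^{m\alpha_i}$ and $|C'_i| = q^{m'\alpha'_i}$. The aim is to build an unambiguous $(m+m')$-code $D$ with $|D_i| = q^{m\alpha_i + m'\alpha'_i}$, so that $\log_{q^{m+m'}}|D_i| = (m\alpha_i + m'\alpha'_i)/(m+m')$.

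Because the network code is $\F_q$-linear and $F$ has entries in $\F_q$, $r$ channel uses are equivalently the $\F_q$-linear action on $\F_q^{rs_i}$ obtained by applying $F$ in each of $r$ time slots; the paper's identification with $\F_{q^r}^{s_i}$ is merely a scalar reinterpretation (noted explicitly just before Remark~3). Fixing this viewpoint, I identify $\F_{q^{m+m'}}^{s_i}$ with $\F_q^{m s_i} \oplus \F_q^{m' s_i}$ by concatenating slot coordinates (and analogously on the receiver side), and I define the $(m+m')$-code
$$D_i \;=\; \{(c, c') \mid c \in C_i,\ c' \in C'_i\}.$$
The size count is immediate: $|D_i| = |C_i|\cdot|C'_i| = q^{m\alpha_i + m'\alpha'_i}$ is a positive integer, so $\log_{q^{m+m'}}|D_i| \in \log_{q^{m+m'}}(\N)$ and equals the $i$-th coordinate of $(m\alpha + m'\alpha')/(m+m')$.

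The substantive step is unambiguity. Since $F \in \F_q^{s \times t}$, the channel action on the $(m+m')$-shot input respects the splitting into the first $m$ and the last $m'$ slots, so for $d = (c,c') \in D_i$,
$$\Fan_i(d, D) \;=\; \Fan_i(c, C) \,\times\, \Fan_i(c', C')$$
as subsets of $\F_q^{m t_i} \oplus \F_q^{m' t_i}$. Given distinct $(c_1, c'_1), (c_2, c'_2) \in D_i$, at least one coordinate differs; unambiguity of $C$ (if $c_1 \ne c_2$) or of $C'$ (if $c'_1 \ne c'_2$) forces one of the product factors to be disjoint, and hence the full product sets are disjoint. This establishes that $D$ is unambiguous for $\mL$ in the $(m+m')$-shot sense.

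The only delicate point is the slotwise decomposition of the $(m+m')$-shot channel, which is where $F \in \F_q^{s\times t}$ is essential: the decomposition $\F_{q^{m+m'}} \cong \F_q^{m} \oplus \F_q^{m'}$ is merely an $\F_q$-vector space splitting (not compatible with field multiplication), but since the channel uses $F$ only $\F_q$-linearly this is enough. Beyond pinning down a consistent $\F_q$-basis of $\F_{q^{m+m'}}$, I do not foresee any real obstacle; everything else is direct bookkeeping.
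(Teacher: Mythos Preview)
Your proposal is correct and follows essentially the same time-sharing approach as the paper: both pull the codes back to $\F_q$-coordinates via bases of $\F_{q^m}$, $\F_{q^{m'}}$, concatenate, and push forward via a basis of $\F_{q^{m+m'}}$, relying on the $\F_q$-linearity of $F$ so that the $(m+m')$-shot channel splits slotwise. If anything, you are more explicit than the paper, which simply asserts unambiguity of the concatenated code; your factorization $\Fan_i((c,c'),D)=\Fan_i(c,C)\times\Fan_i(c',C')$ is exactly the justification the paper omits.
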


\begin{proof}
Let $C$ and $C'$ be an $m$-code and an $m'$-code, respectively, for $\mL$.
Let
$\{\gamma_1,...,\gamma_m\}$ and 
$\smash{\{\gamma'_1,...,\gamma'_{m'}\}}$ be 
ordered bases of  
$\smash{\F_{q^m}}$ and $\smash{\F_{q^{m'}}}$ over $\F_q$, respectively.
We denote by $\smash{\varphi_m:\F_q^m \to \F_{q^m}}$ the $\F_q$-isomorphism
$(a_1,\ldots,a_m) \mapsto \sum_{i=1}^m a_i \gamma_i$. Define
$\smash{\varphi_{m'}:\F_q^{m'} \to \F_{q^{m'}}}$
analogously. Fix an ordered basis
$\smash{\{\beta_1, \ldots,\beta_{m+m'}\}}$ of $\F_{q^{m+m'}}$ over $\F_q$. We extend these three $\F_q$-linear maps coordinate-wise.
Take
$$D:= \varphi_m^{-1}(C) \times  \varphi_{m'}^{-1}(C').$$
Then the $(m+m')$-code
$\varphi_{m+m'}(D)$ is unambiguous and it satisfies
$$\log_{q^{m+m'}} \left(\varphi_{m+m'}(D) \right) = \frac{m}{m+m'} \log_{q^m}(|C|) + \frac{m'}{m+m'} \log_{q^{m'}}(|C'|).$$
This establishes the desired result.
\end{proof}

The following result shows that $\calR(\mL)$ contains the convex hull of $\calR_1(\mL)$.

\begin{theorem}
Let $\mL$ be a $q$-LMUC as in Definition~\ref{def:LMUC}.
Denote by $\mbox{conv}(\calR_1(\mL))$ the convex hull of $\calR_1(\mL)$, i.e., the set of convex combinations of the points of $\calR_1(\mL)$. Then
$$\calR(\mL) \supseteq \mbox{conv}(\calR_1(\mL)).$$
\end{theorem}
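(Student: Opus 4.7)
The plan is to exploit Proposition~\ref{mixed} iteratively to place every rational convex combination of points of $\calR_1(\mL)$ inside $\bigcup_{m\ge 1}\calR_m(\mL)$, and then invoke the closure that appears in the definition of $\calR(\mL)$.

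The key intermediate claim I would establish is: for any $\alpha^{(1)},\ldots,\alpha^{(k)}\in\calR_1(\mL)$ and any positive integers $\lambda_1,\ldots,\lambda_k$ with $N=\lambda_1+\cdots+\lambda_k$, the convex combination $\sum_{i=1}^k (\lambda_i/N)\,\alpha^{(i)}$ lies in $\calR_N(\mL)$. I would first treat the one-point case ($k=1$), which asserts that $\calR_1(\mL)\subseteq\calR_N(\mL)$ for every $N\ge 1$. This follows by induction on $N$: assuming $\alpha\in\calR_N(\mL)$, apply Proposition~\ref{mixed} with $m=N$ and $m'=1$, feeding $\alpha\in\calR_N(\mL)$ and $\alpha\in\calR_1(\mL)$ into the two arguments, to obtain $(N\alpha+\alpha)/(N+1)=\alpha\in\calR_{N+1}(\mL)$. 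The general claim then follows by a second induction on $k$: setting $M=\lambda_1+\cdots+\lambda_{k-1}$, the inductive hypothesis yields $\sum_{i<k}(\lambda_i/M)\,\alpha^{(i)}\in\calR_M(\mL)$, while the base case yields $\alpha^{(k)}\in\calR_{\lambda_k}(\mL)$. A single application of Proposition~\ref{mixed} with $m=M$, $m'=\lambda_k$ then produces the target convex combination in $\calR_{M+\lambda_k}(\mL)=\calR_N(\mL)$. Throughout, one must observe that the rational rates produced indeed lie in $\log_{q^N}(\N^n)$, which is automatic because $\log_q(u)=\log_{q^N}(u^N)$ for every $u\in\N$.

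To conclude, I would use that $\calR_1(\mL)$ is finite: there are only finitely many products $C_1\times\cdots\times C_n\subseteq\F_q^{s_1}\times\cdots\times\F_q^{s_n}$. Hence $\mbox{conv}(\calR_1(\mL))$ is a polytope whose rational convex combinations are dense. Given any $p=\sum_i c_i\,\alpha^{(i)}\in\mbox{conv}(\calR_1(\mL))$, pick rational weights $c_i^{(r)}\ge 0$ summing to $1$ with $c_i^{(r)}\to c_i$; each approximant $p^{(r)}=\sum_i c_i^{(r)}\,\alpha^{(i)}$ belongs to $\bigcup_{m\ge 1}\calR_m(\mL)$ by the intermediate claim, and $p^{(r)}\to p$, which places $p$ in $\calR(\mL)=\overline{\bigcup_{m\ge 1}\calR_m(\mL)}$.

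The main obstacle is organizing the two nested inductions cleanly: Proposition~\ref{mixed} mixes only two rates at a time, so one has to verify at each stage that the two summands lie in the correct shot-indexed regions. Once the trick of mixing $\alpha$ with itself is noticed (to climb the ladder $\calR_1\subseteq\calR_2\subseteq\cdots$), the rest of the argument reduces to routine induction and a standard density argument on a polytope.
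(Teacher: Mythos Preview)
Your proposal is correct and follows essentially the same route as the paper: show that every rational convex combination of points of $\calR_1(\mL)$ lies in some $\calR_m(\mL)$ via time-sharing, and then pass to the closure using density of rational weights together with the finiteness of $\calR_1(\mL)$. The only cosmetic difference is that the paper constructs the time-shared $m$-code directly (``use code $C^{(\ell)}$ for $mb_\ell/c_\ell$ rounds'') and defers the formalization to the field-extension maps from Proposition~\ref{mixed}, whereas you obtain the same conclusion by iterating Proposition~\ref{mixed} as a black box through a pair of nested inductions; your formulation is arguably tidier, but the underlying idea is identical.
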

\begin{proof}
Define $Z=|\calR_1(\mL)|$ and let $\calR_1(\mL)=\{\alpha^{(\ell)} \mid 1 \le \ell \le Z\}$.
Let $A:=\{(a_1,...,a_Z) \in [0,1]^Z \mid a_1+ \cdots +a_Z=1\}$, and let $a \in A$.
We will show that for any real number $\varepsilon >0$, there exists $m \ge 1$
and $\beta \in \calR_m(\mL)$ with
$$\left\| \sum_{\ell=1}^N a_\ell\alpha^{(\ell)} -\beta \right\| \le \varepsilon.$$
This will imply the desired theorem by the definition of closure in the Euclidean topology.

Since the result is clear if $\calR_1(\mL)=\{(0,\dots,0)\}$, we shall assume
$\nu:= \max_\ell \left\| \alpha^{(\ell)} \right\|>0$.
For all $\ell \in \{1, \dots, Z\}$, fix an unambiguous 1-code $\smash{C^{(\ell)} = C^{(\ell)}_1 \times \cdots \times C^{(\ell)}_n}$ that achieves rate~$\alpha_\ell$, i.e., such that
\[\log_q \left(C^{(\ell)}_i  \right) =\alpha_i \mbox{ for all $i$.}\]
Since $A \cap \Q^n$ is dense in $A$, there exists a sequence $(s^k)_{k\in \N}\subseteq A \cap \Q^Z$ with the property that $\lim_{k\rightarrow \infty} s^k=a$.
There exists $k_\varepsilon \in \N$ such that
$\left\| s^k-a \right\| \le \varepsilon/(\nu \sqrt{Z})$ for all
$k \ge k_\varepsilon$. Let $s=s^{k_\varepsilon}$ and $\smash{\beta= \sum_{\ell=1}^Z s_\ell\alpha_\ell}$ and observe that
$$\left\| \sum_{\ell=1}^Z a_\ell\alpha_\ell -\beta \right\| \le \nu \sum_{\ell=1}^Z (s_\ell-a_\ell) \le \nu \sum_{\ell=1}^Z |s_\ell-a_\ell| \le \nu \sqrt{Z} \left\| s-a \right\| \le \varepsilon.$$
Therefore, it suffices to show that $\beta \in \calR_m(\mL)$ for some $m \ge 1$. Write $s=(s_1,...,s_Z)=(b_1/c_1,...,b_Z/c_Z)$, where the $b_\ell$'s and the $c_\ell$'s are integers, and let $m=c_1 \cdots c_Z$. Then $\beta$ can be achieved in $m$ rounds by using code~$C^{(\ell)}$ for $m b_\ell/c_\ell$
rounds, $\ell \in \{1,...,Z\}$, in any order.
A more precise 
formulation can be obtained using 
field extension maps as in the proof of
Proposition~\ref{mixed} (we do not go into the details).
\end{proof}

\section{An Outer Bound for the Achievable Rate Region}
\label{s:outer}

In this section, we establish an outer bound for the achievable rate region of a $q$-LMUC.
Our proof technique derives a lower bound for the size of the fan-out sets introduced in Definition~\ref{fan-out}, which we obtain by estimating the ``amount'' of interference that the users cause to each other. The outer bound is stated in Theorem~\ref{upperBound}, which relies on two preliminary results.

\begin{remark}
Let $M \in \F_q^{a \times b}$ be any matrix. Embed $\F_q$ into $\F_{q^m}$, where $m \ge 1$.
It is well-known that the $\F_q$-rank of $M$ is the same as its $\F_{q^m}$-rank. 
\end{remark}

We start with the following simple observation.

\begin{proposition}
Let $\mL=(n,\pmb{s},\pmb{t},F)$ be a $q$-LMUC and let $m \ge 1$ be an integer.
If $n=1$, then 
$\calR_m(\mL)=\{\alpha \in \log_q(\N) \mid 0 \le \alpha \le \rank F\}$.
\end{proposition}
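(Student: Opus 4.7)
The plan is to observe that the single-user case $n=1$ eliminates interference entirely: by Definition~\ref{def:IS} we have $\IS_1(C)=\{0\}$, and so \eqref{e:fan_is} reduces to $\Fan_1(x,C)=\{xF\}$ for every $x\in C_1$ (here $F=F_{1,1}$). Consequently, an $m$-code $C=C_1$ is unambiguous if and only if the $\F_{q^m}$-linear map
$\varphi\colon \F_{q^m}^{s_1}\to\F_{q^m}^{t_1}$, $x\mapsto xF$, is injective when restricted to $C_1$.

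Next, the remark immediately preceding the proposition gives $\rank_{\F_{q^m}} F=\rank_{\F_q} F=\rank F$, so $|\mathrm{im}\,\varphi|=q^{m\,\rank F}$. For the outer bound, injectivity of $\varphi|_{C_1}$ forces $|C_1|\le q^{m\,\rank F}$, i.e., $\log_{q^m}|C_1|\le \rank F$. For achievability, given any positive integer $u\le q^{m\,\rank F}$, I would fix an $\F_{q^m}$-complement $W$ of $\ker\varphi$ in $\F_{q^m}^{s_1}$ (so $|W|=q^{m\,\rank F}$) and take $C_1$ to be any $u$-element subset of $W$; since $\varphi|_W$ is injective, so is $\varphi|_{C_1}$, and $C_1$ achieves rate $\log_{q^m} u$. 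Combining both directions yields the stated description of $\calR_m(\mL)$.

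There is no genuine obstacle to overcome here: once $n=1$ is recognized as removing all interference, the proposition reduces to the standard linear-algebraic fact that an $\F_{q^m}$-linear map of rank $\rank F$ has image of size $q^{m\,\rank F}$ and admits injective restrictions to subsets of every smaller cardinality. The only subtlety worth flagging is that the kernel of the extended map must be taken as an $\F_{q^m}$-subspace (not merely an $\F_q$-subspace) in order for a complement of the right cardinality to exist; this is precisely why the preceding remark on rank preservation is invoked.
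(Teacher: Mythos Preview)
Your proof is correct and follows essentially the same approach as the paper's: both reduce unambiguity in the single-user case to injectivity of the map $x\mapsto xF$ on $C_1$, bound $|C_1|$ by $q^{m\cdot\rank F}$ via the image size, and achieve this bound by selecting coset representatives of $\ker F$ (the paper) or, equivalently, elements of a complement of $\ker\varphi$ (you). Your version is slightly more explicit in showing that every intermediate cardinality $u\le q^{m\,\rank F}$ is attainable, whereas the paper only constructs the maximal code and tacitly relies on Remark~\ref{rem:subcode}.
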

\begin{proof}
Let $C\subseteq \F_{q^m}^{s_1}$ be an $m$-code. 
Then $C$ is unambiguous if and only if 
$F(x)\neq F(y)$ for all $x,y\in C$ with $x\neq y$, i.e., if and only if 
the elements of $C$ belong to distinct equivalent classes of $\F_{q^m}^{r_1}/\ker(F)$.
This shows that if $C$ is unambiguous, then
$|C| \le q^{m \cdot \rank(F)}$. Vice versa, taking one representative for each class of $\smash{\F_{q^m}^{r_1}/\ker(F)}$ produces an unambiguous $m$-code.
\end{proof}

In the remainder of the section, we show how the argument in the previous proposition extends to an arbitrary number of users. We will need the following preliminary result.

\begin{lemma} \label{lll}
Let $V,W$ be linear spaces over $\F_{q^m}$, $m \ge 1$, and let $L:V \to W$ be an $\F_{q^m}$-linear map. For all non-empty sets $A \subseteq V$, we have
$|L(A)| \cdot |\ker(L)| \ge |A|$.
\end{lemma}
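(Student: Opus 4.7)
The plan is to prove the inequality by a fiber-counting argument: the set $A$ partitions according to the value of $L$, each part sits inside a coset of $\ker(L)$, and no coset has more than $|\ker(L)|$ elements.

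First I would write
\[
A = \bigsqcup_{w \in L(A)} \bigl( L^{-1}(w) \cap A \bigr),
\]
noting that this union is disjoint because the sets $L^{-1}(w)$ for distinct $w \in W$ partition $V$, and that the union runs exactly over $w \in L(A)$ because any other fiber intersects $A$ trivially. Taking cardinalities,
\[
|A| = \sum_{w \in L(A)} \bigl| L^{-1}(w) \cap A \bigr|.
\]

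Next I would bound each summand. For any $w \in L(A)$, pick any representative $v_w \in L^{-1}(w) \cap A$; then $L^{-1}(w) = v_w + \ker(L)$ is a coset of the kernel, so
\[
\bigl| L^{-1}(w) \cap A \bigr| \le \bigl| L^{-1}(w) \bigr| = |\ker(L)|.
\]
Substituting into the identity above gives
\[
|A| \le \sum_{w \in L(A)} |\ker(L)| = |L(A)| \cdot |\ker(L)|,
\]
which is the desired conclusion.

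There is essentially no obstacle: the statement is a set-theoretic inequality that does not use the field structure beyond the existence of the kernel as an additive group. I would simply be careful to handle the case $\ker(L) = \{0\}$ (where $L$ restricted to $A$ is injective and equality holds) and to observe that non-emptiness of $A$ is used only to ensure $L(A)$ is non-empty; otherwise the argument goes through uniformly.
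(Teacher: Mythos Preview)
Your proof is correct and is essentially identical to the paper's argument: the paper phrases the same fiber-counting idea via the equivalence relation $a\sim b \Leftrightarrow L(a)=L(b)$ on $A$, whose classes are exactly your sets $L^{-1}(w)\cap A$, and bounds each class by $|\ker(L)|$ in the same way.
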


\begin{proof}
Define an equivalence relation on $A$ by setting $a \sim b$ if and only if $L(a)=L(b)$, i.e., if and only if 
$a-b \in \ker(L)$. Then
we know from elementary set theory that
$|A/_\sim| = |f(A)|$.
The equivalence class of $a \in A$ is 
$(a+\ker(L)) \cap B$ and, therefore, it has size at most $|\ker(L)|$. Since the equivalence classes partition $A$, we have $|A| \le |A/_\sim| \cdot  |\ker(L)|$. This concludes the proof.
\end{proof}

We are now ready to state the main result of this section, providing an outer bound for the achievable rate region of a $q$-LMUC. 

\begin{notation}
In the sequel, for a non-empty subset $I \subseteq \{1,...,n\}$ and $j \in I$, we denote by~$F_{I,j}$ the submatrix of $F$ formed by the blocks indexed by $(i,j)$, as $i$ ranges over $I$.
\end{notation}

The main result of this section is the following outer bound.

\begin{theorem}\label{upperBound}
Let $\mL=(n,\pmb{s},\pmb{t},F)$ be a $q$-LMUC, and let $m \ge 1$ be an integer.
Let $(\alpha_1,...,\alpha_n) \in \calR_m(\mL)$. Then for all non-empty $I \subseteq \{1,\ldots,n\}$ and for all $j \in I$, we have
$$\sum_{i \in I} \alpha_i \le \rank(F_{I,j}) - \rank (F_{I \setminus\{j\},j})+ \sum_{\substack{k \in I \\ k \neq j}} s_k.$$
Therefore, for all non-empty $I\subseteq \{1,\dots, n\}$, we have 
$$\sum_{i \in I} \alpha_i \le \min_{j \in I}
\left\{ \rank(F_{I,j}) -\rank(F_{I \setminus\{j\},j})+ \sum_{\substack{k \in I \\ k \neq j}} s_k \right\}.$$ 
\end{theorem}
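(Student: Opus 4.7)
The plan is to combine a sphere-packing estimate from unambiguity with a rank--nullity lower bound on the size of the interference set (supplied by Lemma~\ref{lll}), after first reducing to the case $I=\{1,\dots,n\}$ by passing to a subcode. The second inequality in the theorem is immediate from the first by minimising over $j \in I$, so I would focus on the first. Given an unambiguous $m$-code $C = C_1\times\cdots\times C_n$ achieving rate $(\alpha_1,\dots,\alpha_n)$, I would replace it with the subcode $C'$ obtained by setting $C'_i = C_i$ for $i \in I$ and $C'_i = \{0\}$ for $i \notin I$; by Remark~\ref{rem:subcode}, $C'$ is still unambiguous, its fan-out and interference sets at terminal $T_j$ then only involve the blocks $F_{i,j}$ with $i \in I$ (respectively $i \in I\setminus\{j\}$), and $|C'_i| = |C_i|$ for $i \in I$.

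The first ingredient is an upper bound on $|C_j|\cdot |\IS_j(C')|$. By~\eqref{e:fan_is}, the sets $\Fan_j(x, C')$ for $x \in C_j$ are translates of $\IS_j(C')$ of the same cardinality; they are pairwise disjoint by unambiguity, and their union lies inside $\{\sum_{i \in I} v_i F_{i,j} : v_i \in \F_{q^m}^{s_i}\}$, which is the $\F_{q^m}$-rowspan of $F_{I,j}$. Since rank is preserved under the extension $\F_q \subseteq \F_{q^m}$, this rowspan has cardinality $q^{m \cdot \rank(F_{I,j})}$, and comparing sizes gives
\[
|C_j| \cdot |\IS_j(C')| \le q^{m \cdot \rank(F_{I,j})}.
\]

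For the second ingredient, $\IS_j(C')$ is the image of $\prod_{i \in I\setminus\{j\}} C_i$ under the $\F_{q^m}$-linear map defined by the block matrix $F_{I\setminus\{j\},j}$, whose kernel has size $q^{m(\sigma - \rank(F_{I\setminus\{j\},j}))}$ with $\sigma = \sum_{i \in I\setminus\{j\}} s_i$. Lemma~\ref{lll} then yields $|\IS_j(C')| \ge \prod_{i \in I\setminus\{j\}} |C_i| \cdot q^{-m(\sigma - \rank(F_{I\setminus\{j\},j}))}$. Substituting into the previous inequality, using $|C_i| = q^{m\alpha_i}$, and taking $\log_{q^m}$ yields the claimed bound. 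The step I expect to need the most care is identifying the correct ambient space for the disjoint fan-out translates after the subcode reduction: the $\F_{q^m}$-rowspan of $F_{I,j}$, rather than the full $\F_{q^m}^{t_j}$. Without this refinement one would obtain only a bound involving $t_j$, which is strictly weaker than $\rank(F_{I,j})$ whenever $F_{I,j}$ is rank-deficient; once this is in place, the rest is mechanical rank--nullity bookkeeping.
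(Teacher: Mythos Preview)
Your proof is correct and mirrors the paper's argument step for step: reduce to $I$ by shrinking the other factors to singletons, lower-bound the interference (equivalently fan-out) size via Lemma~\ref{lll}, and pack the disjoint fan-out translates inside the $\F_{q^m}$-rowspan of $F_{I,j}$. One cosmetic point: taking $C'_i=\{0\}$ for $i\notin I$ need not give $C'\subseteq C$ (codes are not assumed linear), so Remark~\ref{rem:subcode} does not literally apply; either pick the singleton inside $C_i$ as the paper does---then the fan-out union lies in a \emph{translate} of the rowspan of $F_{I,j}$, which has the same cardinality---or observe directly that any collision in $C'$ produces one in $C$ after adding a fixed element of $\prod_{i\notin I}C_i$ to both sides.
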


\begin{proof}
The result easily follows the definitions if $|I|=1$. Now fix an index set with $|I| \ge 2$,
 a tuple $(\alpha_1,...,\alpha_n) \in \calR_m(\mL)$, and an unambiguous $m$-code $C=(C_1,\dots,C_n)$ with $\alpha_i=\log_{q^m}(|C_i|)$ for all $i$. For $i \notin I$, we replace $C_i$ with an arbitrary subset of cardinality one. The resulting code is unambiguous by Remark~\ref{rem:subcode}, and it has $$\log_{q^m}(|C_i|)=
 \begin{cases}
 \alpha_i & \mbox{ if $i \in I$,} \\
 0 & \mbox{ if $i \notin I$.}
 \end{cases}$$  
 The remainder of the proof uses the following lower bound, which we will establish later.
\begin{claim}\label{cla}
For all $j \in I$ and $x \in C_j$, we have
 \begin{equation} \label{kk}
     |\Fan_j(x,C)| \ge \frac{\prod_{\substack{k \in I \\ k \neq j}}|C_k|}{|\ker F_{I \setminus\{j\},j}|}.
 \end{equation}
 \end{claim}
 Note that by fixing $j \in I$ and summing the inequality in Claim~\ref{cla} over all $x \in C_j$ one obtains
 \begin{equation} \label{kk2}
     \sum_{x \in C_j} |\Fan_j(x,C)| \ge \frac{\prod_{k \in I}|C_k|}{|\ker F_{I \setminus\{j\},j}|} \quad \mbox{for all $j \in I$.}
 \end{equation}
 Since $C$ is unambiguous, we have
 $$\sum_{x \in C_j} |\Fan_j(x,C)| = \left|  
 \bigcup_{x \in C_j} \Fan_j(x,C)\right| \le q^{m \cdot \rank(F_{I,j})},$$
 where the latter inequality follows from the fact that 
 $\bigcup_{x \in C_j} \Fan_j(x,C)$ is contained in the image of~$F_{I,j}$. 
 Therefore, the inequality in~\eqref{kk2} implies
 \begin{equation} \label{kk3}
     q^{m \cdot \rank(F_{I,j})} \ge \frac{\prod_{k \in I}|C_k|}{|\ker(F_{I \setminus\{j\},j})|} \quad \mbox{for all $j \in I$.}
 \end{equation}
 We also have $$\dim_{\F_{q^m}} \ker F_{I \setminus\{j\},j} = \sum_{\substack{k \in I \\ k \neq j}} s_k - \rank (F_{I \setminus\{j\},j}).$$
Therefore, taking the logarithm with base $q^m$ in~\eqref{kk3} yields
  \begin{equation} \label{kk4}
     \rank(F_{I,j}) \ge \alpha_1 + \ldots +\alpha_n -\sum_{\substack{k \in I \\ k \neq j}} s_k + \rank(F_{I \setminus\{j\},j}) \quad \mbox{for all $j \in I$,}
 \end{equation}
 which is the desired bound.

It remains to show that Claim~\ref{cla} holds. We only prove it for $I=\{1,...,n\}$ and $j=1$, as the proof for all other cases is the same (but more cumbersome notation-wise). Fix an arbitrary $x \in C_1$ and view 
$F_{1,\{1,...,n\}}$ as an $\F_{q^m}$-linear map $\F_{q^m}^{s_1+ \cdots+s_n} \to \F_{q^m}^{t_1}$. Then $\Fan_1(x,C)$ is the image of
$C_2 \times \cdots \times C_n$ under the map
$$f: \F_{q^m}^{s_2 + \cdots +s_n} \to \F_{q^m}^{t_1}, \qquad (x_2,...,x_n) \mapsto (x,x_2,...,x_n) F_{\{1,...,n\},1}.$$
We have $f= F_{11}x +g$ as functions, where 
$$g: \F_{q^m}^{s_2 + \cdots +s_n} \to \F_{q^m}^{t_1}, \qquad
(x_2,...,x_n) \mapsto (x_2,...,x_n) F_{\{2,...,n\},1}.$$
Therefore, the images of $f$ and $g$ have the same cardinality, $|\Fan_1(x,C)|$.
Finally, by applying Lemma~\ref{lll}
to the $\F_{q^m}$-linear function $g$, we obtain
$$|\Fan_1(x,C)| \cdot |\ker(g)| \ge |C_2 \times \cdots \times C_n|,$$
which is the inequality in Claim~\ref{cla}.
\end{proof}

We illustrate Theorem~\ref{upperBound} with two examples.

\begin{example} \label{eee1}
Let $q$ be arbitrary and consider the $q$-LMUC 
$\mL=(2,(2,2), (2,2),F)$,
where
\[F = \begin{pmatrix} 
1 & 0 & 0 & 0\\
0 & 1 & 1 & 0\\
0 & 1 & 1 & 0\\ 
0 & 0 & 0 & 1
\end{pmatrix}. \]
Note that $\mL$ is induced, for example, by the network in Figure~\ref{fig:next}.
\begin{figure}[H]
\centering
\begin{tikzpicture}[scale=0.6]
\Vertex[label=$S_1$, x=-2, y=0, size=0.5, color=gray, opacity=0]{S1}
\Vertex[label=$S_2$, x=-2, y=-2.5, size=0.5, color=gray, opacity=0]{S2}
\Vertex[label=$T_1$, x=7, y=0, size=0.5, color=gray, opacity=0]{T1}
\Vertex[label=$T_2$, x=7, y=-2.5, size=0.5, color=gray, opacity=0]{T2}
\Vertex[y = 0.5, size=0.5, size=0.5, color=gray, opacity=0.3,style={shading=ball, ball color=gray}]{1}
\Vertex[y = -0.5, size=0.5, size=0.5, color=gray, opacity=0.3,style={shading=ball, ball color=gray}]{2}
\Vertex[x=0, y=-2, size=0.5, size=0.5,  color=gray,  opacity=0.3, style={shading=ball, ball color=gray}]{3}
\Vertex[x=0, y=-3, size=0.5, size=0.5,  color=gray,  opacity=0.3, style={shading=ball, ball color=gray}]{4}
\Vertex[x=5, y=.5, size=0.5, size=0.5,  color=gray,  opacity=0.3, style={shading=ball, ball color=gray}]{7} 
\Vertex[x=5, y=-.5, size=0.5, size=0.5,  color=gray,  opacity=0.3, style={shading=ball, ball color=gray}]{8} 
\Vertex[x=5, y=-2, size=0.5, size=0.5,  color=gray,  opacity=0.3, style={shading=ball, ball color=gray}]{9}
\Vertex[x=5, y=-3, size=0.5, size=0.5,  color=gray,  opacity=0.3, style={shading=ball, ball color=gray}]{10}
\Edge[Direct, lw=.7](S1)(1)
\Edge[Direct, lw=.7](S1)(2)
\Edge[Direct, lw=.7](S2)(3)
\Edge[Direct, lw=.7](S2)(4)
\Edge[Direct, lw=.7](7)(T1)
\Edge[Direct, lw=.7](8)(T1)
\Edge[Direct, lw=.7](9)(T2)
\Edge[Direct, lw=.7](10)(T2)
\Edge[Direct, lw=.7](1)(7)
\Edge[Direct, lw=.7](2)(8)
\Edge[Direct, lw=.7](3)(9)
\Edge[Direct, lw=.7](4)(10)
\Edge[Direct, lw=.7](3)(8)
\Edge[Direct, lw=.7](2)(9)
\end{tikzpicture} 
\caption{Network for Example~\ref{eee1}.}
\label{fig:next}
\end{figure}
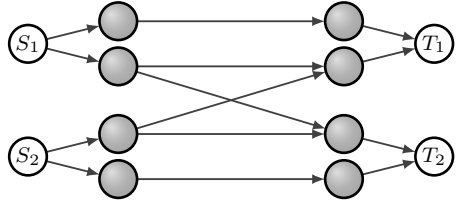 
\noindent By applying Theorem \ref{upperBound}, we obtain that for all $m \ge 1$ and all $(\alpha_1,\alpha_2) \in \calR_m(\mL)$ we have $\alpha_1 + \alpha_2 \le 3$, since
\begin{align*}
    \rank\begin{pmatrix} F_{11}\\ F_{21}\end{pmatrix} - \rank F_{21} +s_2
&= 2 - 1 + 2 = 3,\\
\rank\begin{pmatrix} F_{12}\\ F_{22}\end{pmatrix} - \rank F_{12} +s_1
&= 2 - 1 + 2
= 3.
\end{align*}  

For $m=1$, the $1$-codes $C = \left(\F_q^2, \gen{(0, 1)}\right)$ and $C=\left(\gen{(1,0)}, \F_q^2\right)$ are both unambiguous, meaning that the rates $(2,1)$ and $(1,2)$ are achievable in one shot. This implies that the upper bound of Theorem~\ref{upperBound} is tight in this case.  
\end{example}

\begin{example}\label{eee2}
Let $q$ be arbitrary and consider the $q$-LMUC 
$\mL=(2,(1,2), (1,2),F)$,
where
\[F = \begin{pmatrix} 1 & 1 & 1\\ 1 & 1 & 0\\ 1  & 0 & 1\end{pmatrix}. \]
Note that $\mL$ is induced, for example, by the network in Figure~\ref{fig:gennet}.
\begin{figure}[H]
\centering
\begin{tikzpicture}[scale=0.6]
\Vertex[label=$S_1$, x=-2, y=0, size=0.5, color=gray, opacity=0]{S1}
\Vertex[label=$S_2$, x=-2, y=-2.5, size=0.5, color=gray, opacity=0]{S2}
\Vertex[label=$T_1$, x=7, y=0, size=0.5, color=gray, opacity=0]{T1}
\Vertex[label=$T_2$, x=7, y=-2.5, size=0.5, color=gray, opacity=0]{T2}
\Vertex[size=0.5, size=0.5, color=gray, opacity=0.3,style={shading=ball, ball color=gray}]{1}
\Vertex[x=0, y=-1.5, size=0.5, size=0.5,  color=gray,  opacity=0.3, style={shading=ball, ball color=gray}]{2}
\Vertex[x=0, y=-3.5, size=0.5, size=0.5,  color=gray,  opacity=0.3, style={shading=ball, ball color=gray}]{3}
\Vertex[x=5, y=0, size=0.5, size=0.5,  color=gray,  opacity=0.3, style={shading=ball, ball color=gray}]{7}
\Vertex[x=5, y=-1.5, size=0.5, size=0.5,  color=gray,  opacity=0.3, style={shading=ball, ball color=gray}]{8}
\Vertex[x=5, y=-3.5, size=0.5, size=0.5,  color=gray,  opacity=0.3, style={shading=ball, ball color=gray}]{9}

\Edge[Direct, lw=.7](S1)(1)
\Edge[Direct, lw=.7](S2)(2)
\Edge[Direct, lw=.7](S2)(3)
\Edge[Direct, lw=.7](7)(T1)
\Edge[Direct, lw=.7](8)(T2)
\Edge[Direct, lw=.7](9)(T2)
\Edge[Direct, lw=.7](1)(7)
\Edge[Direct, lw=.7](2)(8)
\Edge[Direct, lw=.7](3)(9)
\Edge[Direct, lw=.7](1)(8)
\Edge[Direct, lw=.7](1)(9)
\Edge[Direct, lw=.7](2)(7)
\Edge[Direct, lw=.7](3)(7)
\end{tikzpicture} 
\caption{Network for Example~\ref{eee2}.}
\label{fig:gennet}
\end{figure}
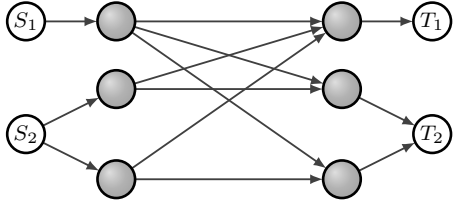 

By applying Theorem \ref{upperBound} we obtain that
for all $m \ge 1$ and for all $(\alpha_1,\alpha_2) \in \calR_m(\mL)$ we have $\alpha_1+\alpha_2\leq 2$.
Indeed, 
\begin{align*}
    \rank\begin{pmatrix} F_{11}\\ F_{21}\end{pmatrix} - \rank F_{21} +s_2
&= 1 - 1 + 2 = 2,  \\
\rank\begin{pmatrix} F_{12}\\ F_{22}\end{pmatrix} - \rank F_{12} +s_1
&= 2 - 1 + 1
= 2.
\end{align*}

In the next section, we will show that whether or not this bound is sharp depends 
on the \textit{characteristic} of the finite field $\F_q$ and not, for example, on $m$.
\end{example}

\section{The Role of the Field Characteristic}
\label{s:char}

The goal of this section is to illustrate the role that the field characteristic plays in the problem we are considering. Note that the problem we are studying in this section is an extension of the problem studied until now. When defining a $q$-LMUC, since $q$ is a given parameter of the network, the characteristic of the field is, in general, fixed. The problem we consider in this section is based on the remark that any field contains the neutral elements for addition and multiplication and that in general those are denoted by 0 and 1. This implies that any matrix with entries only in $\{0,1\}$ can be the transfer matrix of a $q$-LMUC for any prime power $q$. It is natural to look into the achievable rates regions across different fields for these types of $q$-LMUCs.

It is well known that the size of the field plays an important role in achieving the network capacity for multicast networks when the network code design is part of the problem, see \cite{LYC03,KM03}. We show that in sharp contrast with this scenario, the \textit{characteristic} of the underlying field plays an important role in our model. Note that it is not the first time the characteristic of the field has played an important role in network coding. In \cite{DFZ05}, the authors show networks for which capacities are achievable in either even or odd characteristic fields. It is worth repeating that our problem differs from the one studied in \cite{DFZ05} since the authors focus on constructing a network code, whereas in our case, the network code is frozen.  

We show that given the $q$-LMUC from Example \ref{eee2}, the achievable rates region of the network over an odd characteristic field strictly contains the achievable rates region  over an even characteristic field.

\begin{theorem}
Let $\mL=(2,(1,2),(1,2),F)$
be the $q$-LMUC  from Example \ref{eee2}. Recall that
$$F=
\begin{pmatrix}
1 & 1 & 1 \\ 1 & 1 & 0 \\ 1 & 0 & 1
\end{pmatrix}.$$
If $q$ is odd, then 
$(1,1) \in \calR_1(\mL) \subseteq \calR(\mL)$.
If $q$ is even, then for any $m \ge 1$ and any $(\alpha_1,\alpha_2) \in \calR_m(\mL)$ we have
\begin{equation}\label{e:final}
    2\alpha_1+\alpha_2 \le 2.
\end{equation}
In particular, $(1,1) \notin \calR(\mL)$.
\end{theorem}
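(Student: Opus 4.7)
My plan for the odd-characteristic half is to exhibit an explicit interference-aligning $1$-code $C = C_1 \times C_2$ with $C_1 = \F_q$ and $C_2 = \{(a,-a) : a \in \F_q\}$. A direct computation using the block decomposition of $F$ shows that $T_1$ receives $x_1 + a + (-a) = x_1$, so $x_1$ is unambiguously decoded; and $T_2$ receives $(x_1 + a, x_1 - a)$, whose sum and difference give $2x_1$ and $2a$, both uniquely recoverable when $q$ is odd. Hence $C$ is unambiguous with $|C_1| = |C_2| = q$, establishing $(1,1) \in \calR_1(\mL) \subseteq \calR(\mL)$.

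For the characteristic-two bound, the plan is to convert the two unambiguity conditions into cardinality inequalities. Fix $m \ge 1$ and an unambiguous $m$-code $C = C_1 \times C_2$ with $|C_i| = q^{m\alpha_i}$; let $\tau\colon \F_{q^m}^2 \to \F_{q^m}$ be the sum map and $S := \tau(C_2)$. Direct inspection of the block structure of $F$ gives $\Fan_1(x_1,C) = x_1 + S$ and $\Fan_2(x_2,C) = x_2 + L$, where $L = \{(c,c) : c \in C_1\}$. Unambiguity at $T_1$ then translates to $(C_1 - C_1) \cap (S - S) = \{0\}$, so the addition map $C_1 \times S \to \F_{q^m}$ is injective and $|C_1| \cdot |S| \le q^m$.

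The crucial characteristic-two input is that the two coordinates of any word received at $T_2$ sum to $\tau(x_2)$ independently of the interference $x_1$, because $x_1(1,1) \in \ker \tau$. Grouping $C_2$ into fibers $C_{2,s} := C_2 \cap \tau^{-1}(s)$, the constraint at $T_2$ reduces to the projections $A_s \subseteq \F_{q^m}$ of $C_{2,s}$ to the first coordinate (which is injective on each fiber since $\tau(x_2)=s$ determines $x_{2,2}$ from $x_{2,1}$): the translates $\{a + C_1 : a \in A_s\}$ must be pairwise disjoint, giving $|A_s| \cdot |C_1| \le q^m$, and summing over $s \in S$ yields $|C_2| \le |S| \cdot q^m/|C_1|$. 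Combining with the $T_1$ bound produces $|C_1|^2 \cdot |C_2| \le q^{2m}$, which is exactly~\eqref{e:final}; since the half-plane $\{2\alpha_1 + \alpha_2 \le 2\}$ is closed, this also rules out $(1,1) \in \calR(\mL)$. The main obstacle is the per-fiber step at $T_2$: it essentially relies on $x_1 + x_1 = 0$ and fails in odd characteristic, which is precisely why the explicit code in the first paragraph works there.
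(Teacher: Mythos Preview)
Your proof is correct and follows essentially the same approach as the paper. For odd $q$ you use the identical code $C=\F_q\times\langle(1,-1)\rangle$; for even $q$ both arguments first obtain $|C_1|\cdot|S|\le q^m$ from unambiguity at $T_1$ (with $S=\tau(C_2)=\IS_1(C)$), and then exploit that in characteristic two every fan-out set at $T_2$ lies in a single $\tau$-fiber to deduce $|C_2|\cdot|C_1|\le |S|\cdot q^m$. The only cosmetic difference is that you phrase the second step as a fiber-by-fiber count (disjoint translates $a+C_1$ inside each $\tau^{-1}(s)$), whereas the paper phrases it as a global containment $\bigcup_{x_2\in C_2}\Fan_2(x_2,C)\subseteq f_{2,1}^{-1}(\IS_1(C))$; these are two presentations of the same inequality.
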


\begin{remark}
Note that this result not only proves that the achievability regions of the network when using even or odd characteristics are different, but it also shows that the achievability region when communicating using an even characteristic field is strictly contained in the one obtained using an odd characteristic field. More specifically, an interested reader will be able to see that, when using even characteristic fields, the rate $(1,1)$ not only is not achievable but is bounded away, meaning that it cannot be achieved even with infinitely many uses of the network.
\end{remark}

\begin{proof}
Observe that if $q$
is odd, then the 1-code $C=\F_q\times\langle (1,-1)\rangle$ is unambiguous and therefore $(1,1) \in \calR_1(\mL)$, as claimed. 

To prove the second part of the theorem,
denote by
$f_{i,j}$ the multiplication by $F_{i,j}$ on the right,
for $i,j\in\{1,2\}$.
Let $(\alpha_1,\alpha_2) \in \calR_m(\mL)$
and let 
$C=C_1\times C_2$ be an unambiguous $m$-code 
with $\log_{q^m}(C_1)=\alpha_1$ and $\log_{q^m}(C_2)=\alpha_2$.
Recall that by Equation \eqref{e:fan_is},
we have that $|\Fan_1(x,C)|=|\IS_1(C)|$ for all $x\in C_1$. 
Since $C$ is unambiguous, we have $|C_1|\cdot |\IS_1(C)|\leq |\F_{q^m}|=q^m$, or equivalently 
\begin{equation}\label{e:is1}|\IS_1(C)|\leq \frac{q^{m}}{|C_1|}.\end{equation}

Recall that, by definition,
$\IS_1(C)=f_{2,1}(C_2)$.
Observe that for all
$x\in \F_{q^m}$
we have 
$f_{2,1}^{-1}(x)=\{(y,x-y)\mid y\in \F_{q^m}\}$  and therefore it holds that
\begin{equation*}\label{e:even-preimage}
    C_2\subseteq f_{2,1}^{-1}(f_{2,1}(C_2))=f_{2,1}^{-1}(\IS_1(C))=\bigcup_{x\in \IS_1(C)}f_{2,1}^{-1}(x)=\bigcup_{x\in \IS_1(C)}\{(y,x-y)\mid y\in \F_{q^m}\}.
\end{equation*}
In particular, all the elements of $C_2$ are of the form $(y,x-y)$ for some $x \in \IS_1(C)$ and $y \in \F_{q^m}$.
Now fix any $y\in \F_{q^m}$ and $x\in \IS_1(C)$ with $(y,x-y)\in C_2$, and observe that 
\begin{equation}\label{e:even-fan}
\Fan_2((y,x-y),C)=f_{2,2}(y,x-y)+\IS_2(C)=(y,x-y)+\langle(1,1)\rangle.
\end{equation}
Here is where the characteristic of the field starts to play a crucial role. If $q$ is even, then Equation \eqref{e:even-fan} can be rewritten as
\begin{align*}
    \Fan_2((y,x+y),C)= (y,x+y)+\langle(1,1)\rangle=\{(z,x+z)\mid x\in \F_{q^m}\}=f_{2,1}^{-1}(x).
\end{align*}
It follows that
\begin{equation} \label{eql}
    \bigcup_{(y,x+y)\in C_2}\Fan_2((y,x+y),C)\subseteq\bigcup_{x\in \IS_1(C)} f_{2,1}^{-1}(x)=f_{2,1}^{-1}(\IS_1(C)).
\end{equation}
Again by Equation \eqref{e:fan_is} and that $f_{1,2}$ is injective, we have 
$|\Fan_2((y,x+y),C)|=|\IS_2(C)|=|C_1|$. Combining this fact with Equation~\eqref{eql} and with the unambiguity of $C$ we obtain
\[|C_2|\leq \frac{|f_{2,1}^{-1}(\IS_1(C))|}{|\Fan_2((y,x+y),C)|}=\frac{|\IS_1(C)|\cdot |\F_{q^m}|}{|C_1|}\leq\left(\frac{q^m}{|C_1|}\right)^2,\]
where the last equality is a  consequence of Equation \eqref{e:is1}.
Taking the logarithm with base~$q^m$ of the inequality above, one gets
$\alpha_2 \le 2(1-\alpha_1)$, as desired.
\end{proof}


The following proposition shows that some of the rates satisfying Equation \eqref{e:final} are achievable.

\begin{proposition}
Let $\mL=(2,(1,2),(1,2),\F_{2^m})$
be the $2^m$-LMUC  from Example \ref{eee2}. 
Then for any $n\leq m$ we have  $\left(\frac{n}{m}, 2\left(1-\frac{n}{m}\right)\right) \in \calR(\mL)$.
\end{proposition}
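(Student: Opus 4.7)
The plan is to exhibit, for each integer $n\in\{0,1,\dots,m\}$, an explicit one-shot code attaining the rate pair $(n/m,\,2(1-n/m))$; since $\calR_1(\calL)\subseteq\calR(\calL)$, this is enough. The essential observation is that Definition~\ref{def:code} does not require the component sets $C_i$ to be $\F_{2^m}$-linear, so we are free to use subsets that are merely $\F_2$-linear. This extra flexibility is precisely what unlocks achievability on the boundary line $2\alpha_1+\alpha_2=2$ from the preceding theorem, whose obstruction proof uses only $\F_{2^m}$-linearity of the transfer maps and is blind to the finer $\F_2$-structure of the alphabet.

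Concretely, I would fix any $\F_2$-linear direct-sum decomposition $\F_{2^m}=V\oplus W$ with $\dim_{\F_2} V = n$ and $\dim_{\F_2} W = m-n$, and take
\[
C_1 := V \subseteq \F_{2^m}, \qquad C_2 := W\times W \subseteq \F_{2^m}^2.
\]
A direct count gives $|C_1|=2^n$ and $|C_2|=2^{2(m-n)}$, so $\log_{2^m}|C_1|=n/m$ and $\log_{2^m}|C_2|=2(1-n/m)$, matching the target rate.

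It remains to verify that $C_1\times C_2$ is unambiguous, using the transfer blocks read off Example~\ref{eee2}: $F_{1,1}=(1)$, $F_{1,2}=(1,1)$, $F_{2,1}=(1,1)^\top$, $F_{2,2}=I_2$. At terminal~$1$ one computes $\IS_1(C)=W+W=W$, so the fan-outs are the cosets $x_1+W$ for $x_1\in V$; these are pairwise disjoint because $V\cap W=\{0\}$. At terminal~$2$ the fan-out set of $(a,b)\in C_2$ is $(a,b)+\{(v,v):v\in V\}$, and pairwise disjointness reduces again to $V\cap W=\{0\}$, since $(W\times W)\cap\{(v,v):v\in V\}=\{(v,v):v\in V\cap W\}=\{(0,0)\}$.

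I do not anticipate any genuine technical obstacle: once one commits to the $\F_2$-linear viewpoint, both unambiguity checks collapse to the single transversality $V\cap W=\{0\}$, and the rest is bookkeeping. The interesting content is conceptual rather than computational, namely that the outer bound of the preceding theorem is already sharp on a dense subset of its extremal face, and that this sharpness is realized by stepping outside the $\F_{2^m}$-linear category of codes while remaining inside the intermediate $\F_2$-linear one.
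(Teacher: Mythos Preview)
Your proof is correct and takes essentially the same approach as the paper. The paper fixes an $\F_2$-basis $\{x_1,\dots,x_m\}$ of $\F_{2^m}$ and sets $C_1=\langle x_1,\dots,x_n\rangle_{\F_2}$, $C_2=\langle (x_i,0),(0,x_i)\mid i>n\rangle_{\F_2}$, which is precisely your $V$ and $W\times W$ for the particular splitting $V=\langle x_1,\dots,x_n\rangle_{\F_2}$, $W=\langle x_{n+1},\dots,x_m\rangle_{\F_2}$; your unambiguity check via $V\cap W=\{0\}$ matches the paper's computation $C_1\cap\IS_1(C)=C_2\cap\IS_2(C)=\{0\}$.
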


\begin{proof}
 Let $\{x_1,\dots,x_m\}$ be an ordered basis of $\F_{2^m}$ over $\F_2$. Define $C_1=\langle x_1,\dots,x_n\rangle_{\F_2}$ and $C_2=\langle (x_i,0),(0,x_i)\mid i=n+1,\dots,m\rangle_{\F_2}$. We can compute the interference sets $\IS_1(C)=f_{2,1}(C_2)=\langle x_{n+1},\dots,x_m\rangle_{\F_2}$ and $\IS_2(C)=f_{1,2}(C_1)=\langle (x_1,x_1),\dots,(x_n,x_n)\rangle_{\F_2}$. Thus $C_1\times C_2$ is unambiguous because $C_1\cap \IS_1(C)=C_2\cap \IS_2(C)=\{0\}$.
\end{proof}

\section{Conclusions}
We considered the multiple unicast problem over a coded network where the network code is linear over a finite field and fixed. We introduced a  framework to define and investigate the achievable rate region in this context. We established an outer bound on the achievable rate region and provided examples where the outer bound is sharp. Finally, we illustrated the role played by the field characteristic in this problem, which is different than what is generally observed in the context of linear network coding.

\bibliographystyle{IEEEtran}
\bibliography{refs}

\end{document}